\newtheorem{theorem}{Theorem}
\useunder{\uline}{\ul}{}
\newtheorem*{proof}{Proof}
\title{An Efficient Membership Inference Attack for the Diffusion Model by Proximal Initialization}
\author{%
   \textbf{Fei Kong\textsuperscript{1} \quad Jinhao Duan\textsuperscript{2} \quad RuiPeng Ma\textsuperscript{1} \quad Hengtao Shen\textsuperscript{1} \quad Xiaofeng Zhu\textsuperscript{1} \vspace{-5pt}}\\
  \textbf{Xiaoshuang Shi\textsuperscript{1}\thanks{Equal corresponding author} \quad Kaidi Xu\textsuperscript{2\thinspace$*$}} \vspace{3pt}\\
  \textsuperscript{1}University of Electronic Science and Technology of China \\
  \textsuperscript{2}Drexel University 
  \vspace{3pt}\\
  \texttt {kong13661@outlook.com \quad xsshi2013@gmail.com \quad kx46@drexel.edu}

}
\begin{document}

\maketitle

\begin{abstract}

\setcounter{footnote}{0} 

Recently, diffusion models have achieved remarkable success in generating tasks, including image and audio generation. However, like other generative models, diffusion models are prone to privacy issues. In this paper, we propose an efficient query-based membership inference attack (MIA), namely Proximal Initialization Attack (PIA), which utilizes groundtruth trajectory obtained by $\epsilon$ initialized in $t=0$ and predicted point to infer memberships. Experimental results indicate that the proposed method can achieve competitive performance with only two queries on both discrete-time and continuous-time diffusion models. Moreover, previous works on the privacy of diffusion models have focused on vision tasks without considering audio tasks. Therefore, we also explore the robustness of diffusion models to MIA in the text-to-speech (TTS) task, which is an audio generation task. To the best of our knowledge, this work is the first to study the robustness of diffusion models to MIA in the TTS task. Experimental results indicate that models with mel-spectrogram (image-like) output are vulnerable to MIA, while models with audio output are relatively robust to MIA. {Code is available at \url{https://github.com/kong13661/PIA}}.

%




\end{abstract}



\section{Introduction}


Recently, the diffusion model \cite{ho2020denoising, Score_Based,Estimating_Gradients} has emerged as a powerful approach in the field of generative tasks, achieving notable success in image generation \cite{stable_diffusion,Imagen}, audio generation \cite{Popov2021Grad-TTS,Kong2021DiffWave}, video generation \cite{video_diffusion,video_diffusion2}, and other domains. However, like other generative models such as GANs \cite{gans} and VAEs \cite{vaes}, the diffusion model may also be exposed to privacy risks \cite{private_risk} and copyright disputes \cite{copyright}. Dangers such as privacy leaks \cite{logan} and data reconstruction \cite{reconstruction} may compromise the model. Recently, some researchers have explored this topic \cite{SecMI,matsumoto2023membership,hu2023membership,carlini2023extracting}, demonstrating that diffusion models are also vulnerable to privacy issues.

Membership Inference Attacks (MIAs) are the most common privacy risks \cite{shokri2017membership}. MIAs can cause privacy concerns directly and can also contribute to privacy issues indirectly as part of data reconstruction. Given a pre-trained model, MIA aims to determine whether a sample is in the training set or not.

Generally speaking, MIA relies on the assumption that a model fits the training data better \cite{yeom2018privacy,shokri2017membership}, resulting in a smaller training loss. Recently, several MIA techniques have been proposed for diffusion models \cite{SecMI,matsumoto2023membership,hu2023membership}. We refer to the query-based methods proposed in \cite{matsumoto2023membership,hu2023membership} as Naive Attacks because they directly employ the training loss for the attack. However, unlike GANs or VAEs, the training loss for diffusion models is not deterministic because it requires the generation of Gaussian noise. The random Gaussian noise may not be the one in which diffusion model fits best. This can negatively impact the performance of the MIA attack. To address this issue, the concurrent work SecMI \cite{SecMI} adopts an iterative approach to obtain the deterministic $\boldsymbol x$ at a specific time $t$, but this requires more queries, resulting in longer attack times. As models grow larger, the time required for the attack also increases, making time an important metric to consider.

To reduce the time consumption, inspired by DDIM and SecMI, we proposed a Proximal Initialization Attack (PIA) method, which derives its name from the fact that we utilize the diffusion model's output at time $t=0$ as the noise $\epsilon$. PIA is a query-based MIA that relies solely on the inference results and can be applied not only to discrete time diffusion models \cite{ho2020denoising,stable_diffusion} but also to continuous time diffusion models \cite{Score_Based}. We evaluate the effectiveness of our method on three image datasets, CIFAR10 \cite{CIFAR10}, CIFAR100 and TinyImageNet for DDPM and on two images dataset, COCO2017 \cite{COCO} and Laion5B \cite{schuhmann2022laion5b} for Stable DIffuion, as well as three audio datasets, LJSpeech \cite{ljspeech17}, VCTK \cite{vctk}, and LibriTTS \cite{zen2019libritts}.


To our knowledge, recent research on MIA of diffusion models has only focused on image data, and there has been no exploration of diffusion models in the audio domain. However, audio, such as music, encounters similar copyright and privacy concerns as those in the image domain \cite{cnn,wapo}. Therefore, it is essential to conduct privacy research in the audio domain to determine whether audio data is also vulnerable to attacks and to identify which types of diffusion models are more robust against privacy attacks. To investigate the robustness of MIA on audio data, we conduct experiments using Naive Attack, SecMI \cite{SecMI}, and our proposed method on three audio models: Grad-TTS \cite{Popov2021Grad-TTS}, DiffWave \cite{Kong2021DiffWave}, and FastDiff \cite{Huang2022FastDiff}. The results suggest that the robustness of MIA on audio depends on the output type of the model.

Our contributions can be summarized as follows:

\begin{itemize}
\item We propose a query-based MIA method called PIA. Our method employs the output at $t=0$ as the initial noise and the errors between the forward and backward processes as the attack metric. We generalize the PIA on both discrete-time and continuous-time diffusion models.
\item Our study is the first to evaluate the robustness of MIA on audio data. We evaluate the robustness of MIA on three TTS models (Grad-TTS, DiffWave, FastDiff) and three TTS datasets (LJSpeech, VCTK, Libritts) using Naive Attack, SecMI, and our proposed method.
\item Our experimental results demonstrate that PIA achieves similar AUC performance and higher TPR @ 1\% FPR performance compared to SecMI while being 5-10 times faster, with only one additional query compared to Naive Attack. Additionally, our results suggest that, for text-to-speech audio tasks, models that output audio have higher robustness against MIA attacks than those that output mel-spectrograms, which are the image-like output. Based on our findings, we recommend using generation models that output audio to reduce privacy risks in audio generation tasks.
\end{itemize}

\begin{figure}
 \centering 
\includegraphics[width=\textwidth]{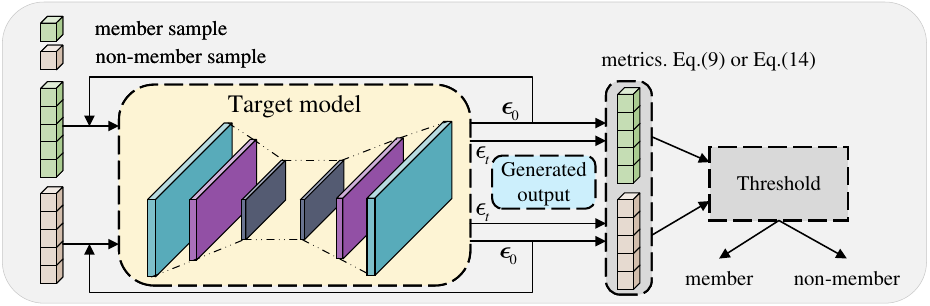} 
\caption{A overview of PIA. First, a sample is an input into the target model to generate $\boldsymbol\epsilon$ at time $0$. Next, we combine the original sample with $\boldsymbol\epsilon_0$ and input them into the target model to generate $\boldsymbol\epsilon$ at time $t$. After that, we input all three variables into a metric and use a threshold to determine if the sample belongs to the training set.}
        \label{overview}
\end{figure}

\section{Related Works and Background}

\textbf{Generative Diffusion Models} ~ Generative diffusion models have recently achieved significant success in both image \cite{Ramesh2022Hierarchical,stable_diffusion} and audio generation tasks \cite{Huang2022FastDiff,Chen2021WaveGrad,Popov2021Grad-TTS}. Unlike GANs \cite{gans,yuan2020attribute,yuan2023dde}, which consist of a generator and a discriminator, diffusion models generate samples by fitting the inverse process of a diffusion process from Gaussian noise. Compared to GANs, diffusion models typically produce higher quality samples and avoid issues such as checkerboard artifacts \cite{salimans2016improved, Donahue2017Adversarial,Dumoulin2017Adversarially}. A diffusion process is defined as 
$\boldsymbol{x}_{t}=\sqrt{\alpha_{t}}\boldsymbol{x}_{t-1}+\sqrt{\beta_{t}} \boldsymbol\epsilon_{t},  \boldsymbol\epsilon_{t} \sim \mathcal{N}(\mathbf{0}, \mathbf{I}) $, where $\alpha_t+\beta_t=1$ and $\beta_t$ increases gradually as $t$ increases, so that eventually, $\boldsymbol x_t$ approximates a random Gaussian noise. In the reverse diffusion process, $\boldsymbol{x}'_t$ still follows a Gaussian distribution, assuming the variance remains the same as in the forward diffusion process, and the mean is defined as $\tilde{\boldsymbol \mu}_{t}=\frac{1}{\sqrt{a_{t}}}\left(\boldsymbol x_{t}-\frac{\beta_{t}}{\sqrt{1-\bar{a}_{t}}} \bar{\boldsymbol{\epsilon}}_{\theta}(\boldsymbol x_t,t)\right)$, where $\bar\alpha_t=\prod_{k=0}^t \alpha_k$ and $\bar\alpha_t+\bar\beta_t=1$. The reverse diffusion process becomes $\boldsymbol x_{t-1}=\tilde{\boldsymbol \mu}_{t}+\sqrt{\beta_t}\boldsymbol\epsilon,\boldsymbol\epsilon\sim \mathcal{N}(\mathbf{0}, \mathbf{I})$. One can obtain a loss function \cref{loss_func} by minimizing the distance between the predicted and groundtruth distributions. \cite{Score_Based} transforms the discrete-time diffusion process into a continuous-time process and uses SDE ( Stochastic Differential Equation) to express the diffusion process. To accelerate the generation process, several methods have been proposed, such as \cite{Salimans2022Progressive,Dockhorn2022Score-Based,Xiao2022Tackling}. DDIM \cite{DDIM} is another popular method that proposes a forward process different from diffusion process with the same loss function as DDPM, allowing it to reuse the model trained by DDPM while achieving higher generation speed.
\begin{equation}
    L = \mathbb{E}_{x_{0}, \bar{\boldsymbol \epsilon}_{t}}\left[\left\|\bar{\boldsymbol \epsilon}_{t}-\boldsymbol \epsilon_{\theta}\left(\sqrt{\bar{\alpha}_{t}} x_{0}+\sqrt{1-\bar{\alpha}_{t}} \bar{\boldsymbol \epsilon}_{t}, t\right)\right\|^{2}\right].
    \label{loss_func}
\end{equation}

\textbf{Membership Inference Privacy} ~ Different from conventional adversarial attacks~\cite{xu2018structured,xu2020adversarial,zhang2022branch}, Membership inference attack (MIA)~\cite{shokri2017membership} aims to determine whether a sample is part of the training data. It can be formally described as follows: given two sets, the training set $\mathcal D_t$ and the hold-out set $\mathcal{D}_h$, a target model $m$, and a sample $\boldsymbol x$ that either belongs to $\mathcal D_t$ or $\mathcal D_h$, the goal of MIA is to find a classifier or function $f(\boldsymbol x, m)$ that determines which set $\boldsymbol x$ belongs to, with $f(\boldsymbol x, m)\in \{0,1\}$ and $f(\boldsymbol x, m)=1$ indicating that $\boldsymbol x\in \mathcal{D}_t$ and $f(\boldsymbol x, m)=0$ indicating that $\boldsymbol x\in \mathcal{D}_h$. If a membership inference attack method utilizes a model's output obtained through queries to attack the model, it is called query-based attack\cite{SecMI,matsumoto2023membership,hu2023membership}. Typically, MIA is based on the assumption that training data has a smaller loss compared to hold-out data. MIA for generation tasks, such as GANs \cite{logan} and VAEs \cite{hilprecht2019monte,chen2020gan}, has also been extensively researched.


Recently, several MIA methods designed for diffusion models have been proposed. \cite{matsumoto2023membership} proposed a method that directly employs the training loss \cref{loss_func} and find a specific $t$ with maximum distinguishability. Because they directly use the training loss, we refer to this method as Naive Attack. SecMI \cite{SecMI} improves the attack effectiveness by iteratively computing the $t$-error, which is the error between the DDIM sampling process and the inverse sampling process at a certain moment $t$.

\textbf{Threat model} ~ We follow the same threat model as \cite{SecMI}, which needs to access intermediate outputs of diffusion models. This is a query-based attack without the knowledge of model parameters but not fully end-to-end black-box. In scenarios such as inpainting \cite{lugmayr2022repaint}, and classification \cite{li2023your}, they also employ the intermediate output of the diffusion model. These works utilize a pre-trained model on a huge dataset to do other tasks, such as inpainting, and classification without fine-tuning. To meet these requirements, future service providers might consider opening up APIs for intermediate outputs. Our work is applicable to such scenarios.


\section{Methodology}
\label{Methodology}

In this section, we introduce DDIM, a variant of DDPM, and provide a proof that if we know any two points in the DDIM framework, $\boldsymbol x_k$ and $\boldsymbol x_0$, we can determine any other point $\boldsymbol x_t$. We then propose a new MIA method that utilizes this property to efficiently obtain $\boldsymbol x_{t-t'}$ and its corresponding predicted sample $x'_{t-t'}$. We compute the difference between these two points and use it to determine if a sample is in the training set. Specifically, samples with small differences are more likely to belong to the training set. An overview of this proposed method is shown in \cref{overview}.



\subsection{Preliminary}





\textbf{Denoising Diffusion Implicit Models} To accelerate the inference process of diffusion models, DDIM defines a new process that shares the same loss function as DDPM. Unlike the DDPM process, which adds noise from $x_0$ to $x_T$, DDIM defines a diffusion process from $x_T$ to $x_1$ by using $x_0$. The process is described in \cref{ddim1} and \cref{ddim_x}. The distribution $q_{\sigma}\left(\boldsymbol{x}_T \mid \boldsymbol{x}_0\right)$ is the same as in DDPM.
\begin{equation}
q_{\sigma}\left(\boldsymbol{x}_{1: T} \mid \boldsymbol{x}_{0}\right):=q_{\sigma}\left(\boldsymbol{x}_{T} \mid \boldsymbol{x}_{0}\right) \prod_{t=2}^{T} q_{\sigma}\left(\boldsymbol{x}_{t-1} \mid \boldsymbol{x}_{t}, \boldsymbol{x}_{0}\right),
\label{ddim1}
\end{equation}
\begin{equation}
q_{\sigma}\left(\boldsymbol{x}_{t-1} \mid \boldsymbol{x}_{t}, \boldsymbol{x}_{0}\right)=\mathcal{N}\left(\sqrt{\bar{\alpha}_{t-1}} x_{0}+\sqrt{1-\bar{\alpha}_{t-1}-\sigma_{t}^{2}} \cdot \frac{\boldsymbol{x}_{t}-\sqrt{\bar{\alpha}_{t}} x_{0}}{\sqrt{1-\bar{\alpha}_{t}}}, \sigma_{t}^{2} \boldsymbol{I}\right) .
\label{ddim_x}
\end{equation}

The denoising process defined by DDIM is described below:
\begin{equation}
    \begin{aligned}\left.p(\boldsymbol{x}_{t'} \mid \boldsymbol{x}_{t}\right) & = p\left(\boldsymbol{x}_{t'} \mid \boldsymbol{x}_{t}, \boldsymbol{x}_{0}=\overline{\boldsymbol{\mu}}\left(\boldsymbol{x}_{t}\right)\right) \\ & =\mathcal{N}\left(\boldsymbol{x}_{t'} ;    \frac{\sqrt{\bar\alpha_{t'}}}{\sqrt{\bar\alpha_t}}\left(\boldsymbol{x}_{t}-\left(
    \sqrt{1-\bar\alpha_t}-\frac{\sqrt{\bar\alpha_t}}{\sqrt{\bar\alpha_{t'}}} \sqrt{ 1-\bar\alpha_{t'} -\sigma_{t}^{2}}\right) \boldsymbol{\epsilon}_{\boldsymbol{\theta}}\left(\boldsymbol{x}_{t}, t\right)\right), \sigma_{t}^{2} \boldsymbol{I}\right)
    \end{aligned} 
    \label{ddim_denoise}
\end{equation}





\subsection{Finding Groundtruth Trajectory}
\label{groundtruth_trajectory}
In this section, we will first demonstrate that if we know $\boldsymbol x_k$ and $\boldsymbol x_0$, we can determine any other $\boldsymbol x_t$. Then, we will provide the method for obtaining $\boldsymbol x_k$.

\begin{theorem}
The trajectory of $\{\boldsymbol x_t\}$ is determined if we know $x_0$ and any other point $x_k$ when $\sigma_t=0$ under DDIM framework.
\end{theorem}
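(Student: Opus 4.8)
The plan is to exploit the fact that setting $\sigma_t=0$ in the DDIM conditional \cref{ddim_x} renders the process deterministic, and then to show that a single noise vector parametrizes the entire trajectory. Concretely, with $\sigma_t=0$ the Gaussian $q_{\sigma}(\boldsymbol x_{t-1}\mid \boldsymbol x_t,\boldsymbol x_0)$ has zero covariance, so it collapses to a point mass at its mean. This turns \cref{ddim_x} into the deterministic recursion
\begin{equation}
\boldsymbol x_{t-1}=\sqrt{\bar\alpha_{t-1}}\,x_0+\sqrt{1-\bar\alpha_{t-1}}\cdot\frac{\boldsymbol x_t-\sqrt{\bar\alpha_t}\,x_0}{\sqrt{1-\bar\alpha_t}}.
\end{equation}

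The key step is to introduce the reduced noise $\boldsymbol\epsilon_t:=(\boldsymbol x_t-\sqrt{\bar\alpha_t}\,x_0)/\sqrt{1-\bar\alpha_t}$ and to prove it is invariant along the trajectory. Substituting the recursion above into the definition of $\boldsymbol\epsilon_{t-1}$ collapses the $\sqrt{1-\bar\alpha_{t-1}}$ factors and yields $\boldsymbol\epsilon_{t-1}=\boldsymbol\epsilon_t$; by induction $\boldsymbol\epsilon_t\equiv\boldsymbol\epsilon$ for a single fixed vector $\boldsymbol\epsilon$ that does not depend on $t$. Rearranging the definition then gives the closed form $\boldsymbol x_t=\sqrt{\bar\alpha_t}\,x_0+\sqrt{1-\bar\alpha_t}\,\boldsymbol\epsilon$ valid for every $t$, which already shows that once $x_0$ and the single vector $\boldsymbol\epsilon$ are known, every $\boldsymbol x_t$ is determined.

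It then remains to recover $\boldsymbol\epsilon$ from the two given points. Since the closed form holds in particular at $t=k$, I solve $\boldsymbol x_k=\sqrt{\bar\alpha_k}\,x_0+\sqrt{1-\bar\alpha_k}\,\boldsymbol\epsilon$ for $\boldsymbol\epsilon=(\boldsymbol x_k-\sqrt{\bar\alpha_k}\,x_0)/\sqrt{1-\bar\alpha_k}$, which is well defined because $\bar\alpha_k<1$ for $k\ge 1$. Substituting this expression back into the closed form writes any $\boldsymbol x_t$ purely in terms of the known quantities $x_0$ and $\boldsymbol x_k$, completing the argument.

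The main obstacle I anticipate is the invariance claim $\boldsymbol\epsilon_{t-1}=\boldsymbol\epsilon_t$: although it reduces to a one-line substitution once the reduced-noise variable is chosen, recognizing that this is exactly the right quantity to track, and verifying that it propagates cleanly under the recursion for all $t$, is what makes the telescoping collapse work. Everything else is routine algebraic rearrangement together with the observation that zero variance forces the conditional onto its mean.
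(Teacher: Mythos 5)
Your proposal is correct and follows essentially the same route as the paper: both set $\sigma_t=0$ to collapse \cref{ddim_x} into the deterministic recursion, observe that the reduced-noise quantity $(\boldsymbol x_t-\sqrt{\bar\alpha_t}\,\boldsymbol x_0)/\sqrt{1-\bar\alpha_t}$ is invariant across time steps, and then solve for it using $\boldsymbol x_k$ to obtain the closed form \cref{ddim_determined_k}. Your write-up is merely a more explicit version (naming the invariant $\boldsymbol\epsilon_t$ and phrasing the telescoping as induction) of the paper's one-line ``apply this equation recurrently'' argument.
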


\begin{proof}
    In DDIM definition, if standard deviation $\sigma_t=0$, the process adding noise becomes determined. So \cref{ddim_x} can be rewritten to \cref{ddim_determined}.
    \begin{equation}
        \boldsymbol{x}_{t-1}=\sqrt{\bar{\alpha}_{t-1}} \boldsymbol{x}_{0}+\sqrt{1-\bar{\alpha}_{t-1}} \cdot \frac{\boldsymbol{x}_{t}-\sqrt{\bar{\alpha}_{t}} \boldsymbol{x}_{0}}{\sqrt{1-\bar{\alpha}_{t}}}.
        \label{ddim_determined}
    \end{equation}
    Assuming that we know any point $\boldsymbol x_k$. \cref{ddim_determined} can be rewritten as $\frac{\boldsymbol x_{t-1}-\sqrt{\bar{\alpha}_{t-1}} \boldsymbol{x}_{0}}{\sqrt{1-\bar{\alpha}_{t-1}}}=\frac{\boldsymbol{x}_{t}-\sqrt{\bar{\alpha}_{t}} \boldsymbol{x}_{0}}{\sqrt{1-\bar{\alpha}_{t}}}$. By applying this equation recurrently, we can obtain \cref{ddim_determined_k}. In other words, we can obtain any point $x_t$ except $x_k$.
    \begin{equation}
        \boldsymbol{x}_{t}=\sqrt{\bar{\alpha}_{t}} \boldsymbol{x}_{0}+\sqrt{1-\bar{\alpha}_{t}} \cdot \frac{\boldsymbol{x}_{k}-\sqrt{\bar{\alpha}_{k}} \boldsymbol{x}_{0}}{\sqrt{1-\bar{\alpha}_{k}}}.
        \label{ddim_determined_k}
    \end{equation}
\end{proof}

We call the trajectory obtained from $\boldsymbol x_k$ \emph{groundtruth trajectory}.

Assuming that the point is $\boldsymbol{x}_{k}=\sqrt{\bar{a}_{k}} \boldsymbol{x}_{0}+\sqrt{1-\bar{a}_{k}} \overline{\boldsymbol{\epsilon}}_{k}$, to find a better groundtruth trajectory, we choose $k=0$ since the choice of $k$ is arbitrary, and approximate $\boldsymbol{\bar\epsilon}_0$ using \cref{eps_approx}.
\begin{equation}
    \boldsymbol{\epsilon}_{\boldsymbol{\theta}}\left(\sqrt{\bar{a}_{0}} \boldsymbol{x}_{0}+\sqrt{1-\bar{a}_{0}} \overline{\boldsymbol{\epsilon}}_{0}, 0\right) \approx \boldsymbol{\epsilon}_{\boldsymbol{\theta}}\left( \boldsymbol{x}_{0} , 0\right).
    \label{eps_approx}
\end{equation}

This choice is intuitive. First, $\bar\alpha_0$ is very close to $1$, making the approximation in \cref{eps_approx} valid. Second, the time $t=0$ is the closest timing to the original sample, so the model is likely to fit it better.



\subsection{Exposing Membership via Groundtruth Trajectory and Predicted Point}

Our approach assumes that the training set's samples have a smaller loss, similar to many other MIAs, meaning that the training samples align more closely with the groundtruth trajectory. We measure the distance between any groundtruth point $\boldsymbol x_{t-t'}$ and the predicted point $\boldsymbol x'_{t-t'}$ using the $\ell_p\text{-norm}$, which can be expressed by \cref{measure}. Here, $\boldsymbol x'_{t-t'}$ denotes the point predicted by the model from $\boldsymbol x_{t}$. To apply this attack, we need to select a specific time $t-t'$, and we choose the time $t'=t-1$ since it is the closest. However, we will demonstrate later that the choice of $t'$ is not significant in discrete-time diffusion.
\begin{equation}
    d_{t-t'} = \left\Vert \boldsymbol{x}_{t-t'} - \boldsymbol{x}'_{t-t'} \right\Vert_p.
    \label{measure}
\end{equation}


To predict $\boldsymbol{x}'_{t-t'}$ from the groundtruth point $\boldsymbol{x}_{t}$, we apply the deterministic version ($\sigma_t=0$) of the DDIM denoising process \cref{ddim_denoise}.

We use method described in \cref{groundtruth_trajectory} to obtain the groundtruth point $\boldsymbol{x}_{t}$ and $\boldsymbol{x}_{t-t'}$. We then insert these points into \cref{measure}, giving us a simpler formula: 
$$
    \frac{\sqrt{1-\bar\alpha_{t-t'}}\sqrt{\bar\alpha_t}-\sqrt{1-\bar\alpha_t}\sqrt{\bar\alpha_{t-t'}}}{\sqrt{\bar\alpha_t}}\left\Vert \boldsymbol{\bar\epsilon}_0-\boldsymbol{\epsilon}_{\boldsymbol{\theta}}\left(\sqrt{\bar{a}_{t}} \boldsymbol{x}_{0}+\sqrt{1-\bar{a}_{t}} \overline{\boldsymbol{\epsilon}}_{0}, t\right) \right\Vert_p.
$$

If we ignore the coefficient, $t'$ disappears. Finally, the metric ignoring the coefficient reduces to \cref{final}, where samples with smaller $R_{t,p}$ are more likely to be training samples.
\begin{equation}
    R_{t,p} = \left\Vert  \boldsymbol{\epsilon}_{\boldsymbol{\theta}}\left( \boldsymbol{x}_{0} , 0\right)-\boldsymbol{\epsilon}_{\boldsymbol{\theta}}\left(\sqrt{\bar{a}_{t}} \boldsymbol{x}_{0}+\sqrt{1-\bar{a}_{t}}  \boldsymbol{\epsilon}_{\boldsymbol{\theta}}\left( \boldsymbol{x}_{0} , 0\right), t\right) \right\Vert_p.
    \label{final}
\end{equation}

    



Since $\epsilon$ is initialized in time $t=0$, we call our method Proximal Initialization Attack (PIA).


\textbf{Normalization} The values of $\boldsymbol{\epsilon}_{\boldsymbol{\theta}}\left(\boldsymbol{x}_{0}, 0\right)$ may not conform to a standard normal distribution, so we use \cref{PIAN} to normalize them. $N$ represents the number of elements in the sample, such as $h \times w$ for an image. We refer to this method as PIAN (PIA Normalized). Although this normalization cannot guarantee that $\hat{\boldsymbol{\epsilon}}_{\boldsymbol{\theta}}\left(\boldsymbol{x}_{0},0\right)\sim\mathcal{N}(\mathbf{0},\mathbf{I})$, we deem it reasonable since each element of $\bar\epsilon_t$ in the training loss \cref{loss_func} is identically and independently distributed.
\begin{equation}
    \hat{\boldsymbol{\epsilon}}_{\boldsymbol{\theta}}\left( \boldsymbol{x}_{0} , 0\right)=
    \frac{\boldsymbol{\epsilon}_{\boldsymbol{\theta}}( \boldsymbol{x}_{0} , 0)}{\mathbb E_{ x  \sim \mathcal N( 0, 1)}(| x|)\frac{\Vert\boldsymbol{\epsilon}_{\boldsymbol{\theta}}( \boldsymbol{x}_{0} , 0)\Vert_1}{N}} = N \sqrt{\frac{\pi}{2}} \frac{\boldsymbol{\epsilon}_{\boldsymbol{\theta}}( \boldsymbol{x}_{0} , 0)}{\Vert\boldsymbol{\epsilon}_{\boldsymbol{\theta}}( \boldsymbol{x}_{0} , 0)\Vert_1}.
    \label{PIAN}
\end{equation}

To apply our attack, we first evaluate the value of $R_{t,p}$ on a sample, and use an indicator function:
\begin{equation}
    f(\boldsymbol{x}, m)=\mathds{1}[R_{t,p} < \tau]. \label{threshold}
\end{equation}
This indicator means we consider whether a sample is in the training set if $R_{t,p}$ is smaller than a threshold $\tau$. $R_{t,p}$ is obtained from ${\boldsymbol{\epsilon}}_{\boldsymbol{\theta}}\left( \boldsymbol{x}_{0} , 0\right)$ (PIA) or $\hat{\boldsymbol{\epsilon}}_{\boldsymbol{\theta}}\left( \boldsymbol{x}_{0} , 0\right)$ (PIAN).




\subsection{For Continuous-Time Diffusion Model}

Recently, some diffusion models are trained with continuous time. As demonstrated in \cite{Score_Based}, the diffusion process with continuous time can be defined by a stochastic differential equation (SDE) as $d\boldsymbol x_t= \boldsymbol f_t(\boldsymbol{x}_t)dt + g_t d\boldsymbol w_t$, where $\boldsymbol w_t$ is a Brownian process. One of the reverse process is $
    d \boldsymbol{x}_t=\left(\boldsymbol{f}_{t}(\boldsymbol{x}_t)-\frac{1}{2}\left(g_{t}^{2}+\sigma_{t}^{2}\right) \nabla_{\boldsymbol{x}_t} \log p_{t}(\boldsymbol{x}_t)\right) d t+\sigma_{t} d \boldsymbol{w}$. When $\sigma_t=0$, this formula becomes an ordinary differential equation (ODE): $
    d \boldsymbol{x}_t=\left(\boldsymbol{f}_{t}(\boldsymbol{x}_t)-\frac{1}{2}g_{t}^{2} \nabla_{\boldsymbol{x}_t} \log p_{t}(\boldsymbol{x}_t)\right) d t
    \label{ODE}$.
Continuous-time diffusion model train an $\boldsymbol{s}_{\boldsymbol{\theta}}$ to approximate $\nabla_{\boldsymbol{x}_t} \log p_{t}(\boldsymbol{x}_t)$, so the loss function will be:
\begin{align*}
    L = \mathbb{E}_{\boldsymbol{x}_{0}, \boldsymbol{x}_{t} \sim p\left(\boldsymbol{x}_{t} \mid \boldsymbol{x}_{0}\right) \bar{p}\left(\boldsymbol{x}_{0}\right)}\left[\left\|\boldsymbol{s}_{\boldsymbol{\theta}}\left(\boldsymbol{x}_{t}, t\right)-\nabla_{\boldsymbol{x}_{t}} \log p\left(\boldsymbol{x}_{t} \mid \boldsymbol{x}_{0}\right)\right\|^{2}\right].
    \vspace{-1mm}
\end{align*}

Replacing $\nabla_{\boldsymbol{x}_t} \log p_{t}(\boldsymbol{x}_t)$ with $\boldsymbol{s}_{\boldsymbol{\theta}}\left(\boldsymbol{x}_{t}, t\right)$, the inference procedure become the following equation:
\begin{equation}
    d \boldsymbol{x}_t=\left(\boldsymbol{f}_{t}(\boldsymbol{x}_t)-\frac{1}{2}g_{t}^{2} \boldsymbol{s}_{\boldsymbol{\theta}}(\boldsymbol{x}_{t}, t)\right) d t.
    \label{ODE_inference}
\end{equation}


The distribution $p(\boldsymbol{x}_{t} | \boldsymbol{x}_{0})$ is typically set to be the same as in DDPM for continuous-time diffusion models. Therefore, the loss of the continuous-time diffusion model and the loss of the concrete-diffusion model \cref{loss_func} are similar. Since DDPM and the diffusion model described by SDE share a similar loss, our method can be applied to continuous-time diffusion models. However, due to the different diffusion process, $R_{t,p}$ differs from \cref{final}. From \cref{ODE_inference}, we obtain the following equation:
$
    \boldsymbol x_{t-t'} - \boldsymbol x_{t} \approx d\boldsymbol x_t = \left(\boldsymbol{f}_{t}(\boldsymbol{x}_t)-\frac{1}{2}g_{t}^{2} \boldsymbol{s}_{\boldsymbol{\theta}}\left(\boldsymbol{x}_{t}, t\right)\right) dt.
$
By substituting this equation into \cref{measure}, we obtain the following equation:
\begin{equation}
    \Vert \boldsymbol x_{t-t'} - \boldsymbol x'_{t-t'} \Vert_p \approx \left\Vert \left(\boldsymbol{f}_{t}(\boldsymbol{x}_t)-\frac{1}{2}g_{t}^{2} \boldsymbol{s}_{\boldsymbol{\theta}}\left(\boldsymbol{x}_{t}, t\right)\right) d t + \boldsymbol x_{t} - \boldsymbol x'_{t-t'} \right\Vert_p.
    \label{approx}
\end{equation}

By ignoring the high-order infinitesimal term $\boldsymbol x_{t} - \boldsymbol x'_{t-t'}$ in \cref{approx}, we can obtain $\Vert \boldsymbol x_{t-t'} - \boldsymbol x'_{t-t'} \Vert_p \approx \left\Vert \left(\boldsymbol{f}_{t}(\boldsymbol{x}_t)-\frac{1}{2}g_{t}^{2} \boldsymbol{s}_{\boldsymbol{\theta}}\left(\boldsymbol{x}_{t}, t\right)\right) d t\right\Vert_p dt$. We ignore $dt$ and use the following attack metric:
\begin{equation}
    R_{t,p} = \left\Vert\boldsymbol{f}_{t}(\boldsymbol{x}_{t})-\frac{1}{2}g_{t}^{2} \boldsymbol{s}_{\boldsymbol{\theta}}\left(\boldsymbol{x}_{t}, t\right)\right\Vert_p,
    \label{sde_final}
\end{equation}

where $\boldsymbol{x}_{t}$ is obtained from the output of $\boldsymbol{s}_{\boldsymbol{\theta}}(\boldsymbol{x}_{0}, 0)$, similar to the discrete-time diffusion case.

\section{Experiment}

In this section, we evaluate the performance of PIA and PIAN and robustness of TTS models across various datasets and settings. The detailed experimental settings, including datasets, models, and hyper-parameter settings can be found in Appendix A.

\subsection{Evaluation Metrics}

We follow the most convincing metrics used in MIAs~\cite{carlini2023extracting}, including AUC, the True Positive Rate (TPR) when the False Positive Rate (FPR) is 1\%, i.e., TPR @ 1\% FPR, and TPR @ 0.1\% FPR.

    

\subsection{Proximal Initialization Attack Performance}
\label{performace}
\begin{table}[t!]
\centering
\caption{Performance of different methods on Grad-TTS. TPR@x\% is the abbreviation for TPR@x\% FPR.}
\label{gradtts_performence}
\adjustbox{width=0.95\textwidth}{
\begin{tabular}{c|cc|cc|cc|c}
\toprule
            & \multicolumn{2}{c|}{LJspeech} & \multicolumn{2}{c|}{VCTK}     & \multicolumn{2}{c|}{LibriTTS} &         \\ \midrule
Method      & AUC           & TPR@1\% FPR   & AUC           & TPR@1\% FPR   & AUC           & TPR@1\% FPR  & Query    \\ \midrule
NA~\cite{matsumoto2023membership}     & 99.4          & 93.6          & 83.4          & 6.1           & 90.2          & 9.1      & \textbf{1}     \\
SecMI~\cite{SecMI}      & {\ul 99.5}          & 94.0          & 87.0          & 14.8          & {\ul 93.9}          & 19.7    & 60+2        \\
PIA    & \textbf{99.6} & {\ul 94.2}    & {\ul 87.8} & \textbf{20.6} & \textbf{95.4}    & {\ul 30.0}   & {\ul 1+1}  \\
PIAN   & { 99.3}    & \textbf{95.7} & \textbf{88.1}    & {\ul 19.6}    & { 93.4} & \textbf{44.7} & {\ul 1+1} \\ \bottomrule
\end{tabular}

}

\end{table}

\begin{table}[t!]
\centering
\caption{Performance of the different methods on DDPM.}
\label{image_generation}
\adjustbox{width=0.8\textwidth}{


\begin{tabular}{c|cc|cc|cc|c}
\toprule
\multicolumn{1}{l|}{} & \multicolumn{2}{c|}{CIFAR10}  & \multicolumn{2}{c|}{TN-IN}    & \multicolumn{2}{c|}{CIFAR100} &            \\ \midrule
Method                & AUC           & TPR@1\% FPR    & AUC           & TPR@1\% FPR    & AUC           & TPR@1\% FPR    & Query      \\ \midrule
NA                    & 84.7          & 6.85          & 84.9          & 10.0          & 82.3          & 9.6           & \textbf{1} \\ 
SecMI                 & {\ul 88.1}    & 9.11          & {\ul 89.4}    & 12.7          & {\ul 87.6}    & 11.1          & 10+2       \\
PIA                   & \textbf{88.5} & {\ul 13.7}    & \textbf{89.6} & {\ul 17.1}    & \textbf{89.4} & {\ul 19.6}    & {\ul 1+1}  \\
PIAN                  & 87.8          & \textbf{31.2} & 88.2          & \textbf{32.8} & 86.5          & \textbf{22.2} & {\ul 1+1}  \\ \bottomrule
\end{tabular}

}
\end{table}

\begin{table}[]
\centering
\caption{Performance of different methods on stable diffusion.}
\label{stable_diffusion}
\adjustbox{width=0.8\textwidth}{
\begin{tabular}{c|cc|cc|cc|c}
\toprule
         & \multicolumn{2}{c|}{Laion5}         & \multicolumn{2}{c|}{Laion5 w/o text} & \multicolumn{2}{c|}{Laion5 Blip text}    &            \\ \midrule
Method    & AUC                 & TPR@1\% FPR    & AUC            & TPR@1\% FPR          & AUC                 & TPR@1\% FPR     & Query        \\ \midrule
NA      & 66.3                & 14.8          & 65.2           & 13.3                & 68.2                & 16.2        & \textbf{1}        \\
SecMI     & {\ul {69.1}} & {\ul 16.1}    & {\ul 71.6}     & {\ul {14.5}} & {\ul {71.6}} & {\ul {17.8}} & 10+2     \\
PIA      & \textbf{70.5}       & \textbf{18.1} & \textbf{73.9}  & \textbf{19.8}       & \textbf{73.3}       & \textbf{20.2}    & {\ul 1+1}   \\
PIAN     & 56.7                & 4.8           & 58.8           & 3.2                 & 55.3                & 3.2        & {\ul 1+1}         \\ \bottomrule
\end{tabular}
}
\end{table}

We train TTS models on the LJSpeech, VCTK, and LibriTTS datasets. We summarize the AUC and TPR @ 1\% FPR results on GradTTS, a continuous-time diffusion model, in \cref{gradtts_performence}. We employ NA to denote Naive Attack. Compared to SecMI, PIA and PIAN achieve slightly better AUC performance, and significantly higher TPR @ 1\% FPR performance, i.e., 5.4\% higher for PIA and 10.5\% higher for PIAN on average. However, our proposed method only requires $1+1$ queries, just one more query than Naive Attack, and has a computational consumption of only 3.2\% of SecMI. Both methods outperform SecMI and Naive Attack.

For DDPM, a discrete-time diffusion model, we present the results in \cref{image_generation}. For this model, PIA performs slightly better than SecMI in terms of AUC but has a distinctly higher TPR @ 1\% FPR than SecMI, i.e. 5.8\% higher on average than SecMI. For PIAN, the AUC performance is slightly lower than PIA, but higher than SecMI, and the TPR @ 1\% FPR performance is significantly better than SecMI, i.e. 17.8\% higher on average than SecMI. Similar to the previous case, our attack only requires two queries on DDPM and the computational consumption is 17\% of SecMI. Both methods outperform SecMI and Naive Attack.

For stable diffusion, we present the results in \cref{stable_diffusion}. We evaluated stable diffusion on Laion5 (training dataset) and COCO (evaluation dataset). Details are put into A.2. We tested three scenarios: knowing the ground truth text (Laion5), not knowing the ground truth text (Laion5 w/o text), and generating text through blip (Laion5 Blip text). PIA achieved the best results. PIA performs slightly better than SecMI in terms of AUC, i.e. 1.8\% higher on average, but has a distinctly higher TPR @ 1\% FPR than SecMI, i.e. 3.2\% higher on average. Besides, our attack only requires two queries on DDPM and the computational consumption is 17\% of SecMI.

However, PIAN does not work well in stable diffusion. PIAN based on the fact that we added noise that follows a normal distribution during training, and we use \cref{PIAN} to rescale the $\epsilon$ to normal distribution. However, rescaling is a rough operation and may not always transform into a normal distribution. Thus, some other transforms might have better performance. Additionally, the model's output might be more accurate before the rescaling.

We highly recommend using PIA as the preferred method for conducting attacks, because it is directly derived. It will always yield the desired results. But PIAN can be another choice, since it has better performance at TPR @ 1\% FPR metric than PIA on some models.

\vspace{-2mm}

\subsection{Ablation Study}

\begin{figure}
    \begin{subfigure}{0.32\textwidth}
        \centering 
        \includegraphics[width=\textwidth]{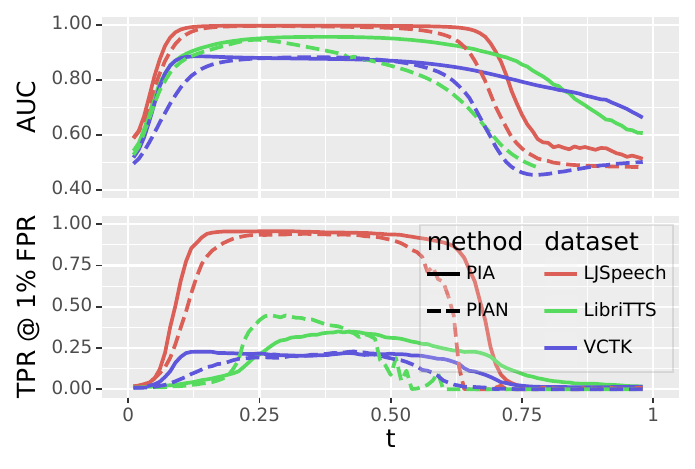} 
    \captionsetup{skip=0pt, width=.95\linewidth}
      \subcaption[]{The results of PIA and PIAN on Grad-TTS for different values of $t$ and different datasets.}
        \label{ablation_a}
    \end{subfigure}
    \begin{subfigure}{0.32\textwidth}
    \captionsetup{skip=0pt, width=.95\linewidth}
        \centering \includegraphics[width=\textwidth]{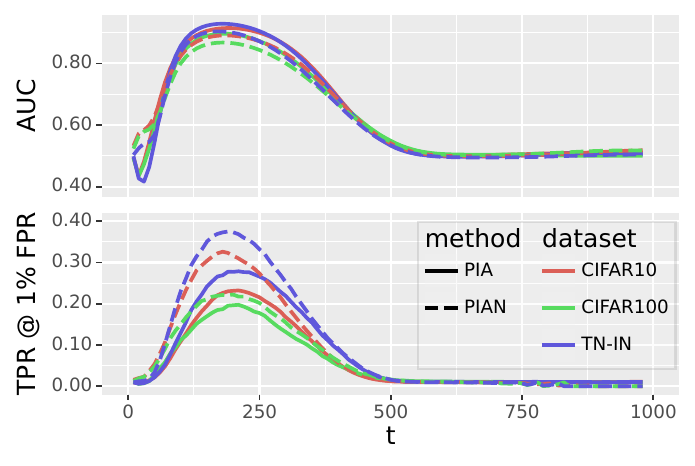}
        \subcaption[]{The results of PIA and PIAN on DDPM for different values of t and different datasets.}
        \label{ablation_b}
    \end{subfigure}
    \begin{subfigure}{0.32\textwidth}
    \captionsetup{skip=0pt, width=.95\linewidth}
        \centering \includegraphics[width=\textwidth]{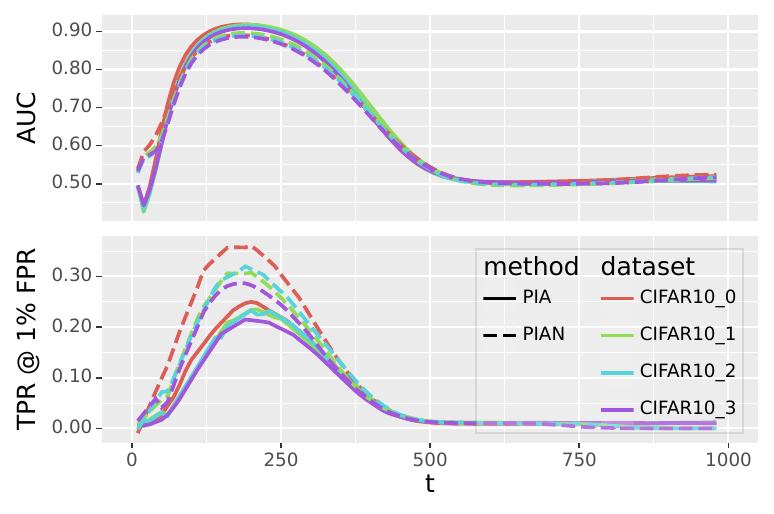}
        \subcaption[]{The results of PIA and PIAN on DDPM for different values of $t$ and different CIFAR10 splits.}
        \label{ablation_c}
    \end{subfigure}
    
\caption{The performance of PIA and PIAN as $t$ varies. The top row shows the results for AUC, and the bottom row shows the results for TPR @ 1\% FPR.}
        \label{ablation}
    \vspace{-5mm}
\end{figure}


\begin{figure}
\centering
    \begin{subfigure}{0.45\textwidth}
        \centering 
        \includegraphics[width=\textwidth]{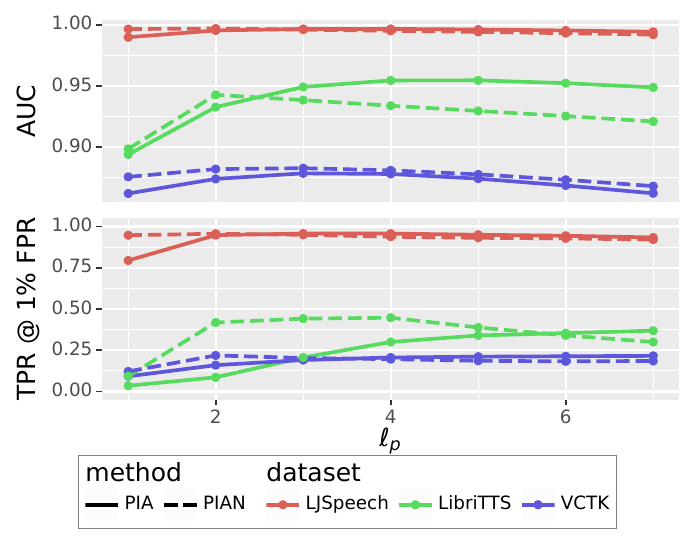} 
    \captionsetup{skip=0pt, width=.95\linewidth}
      \subcaption[]{The results of PIA and PIAN on Grad-TTS for different values of $\ell_p$-norm.}
    \end{subfigure}
    \begin{subfigure}{0.45\textwidth}
    \hspace{3mm}
    \captionsetup{skip=0pt, width=.95\linewidth}
        \centering \includegraphics[width=\textwidth]{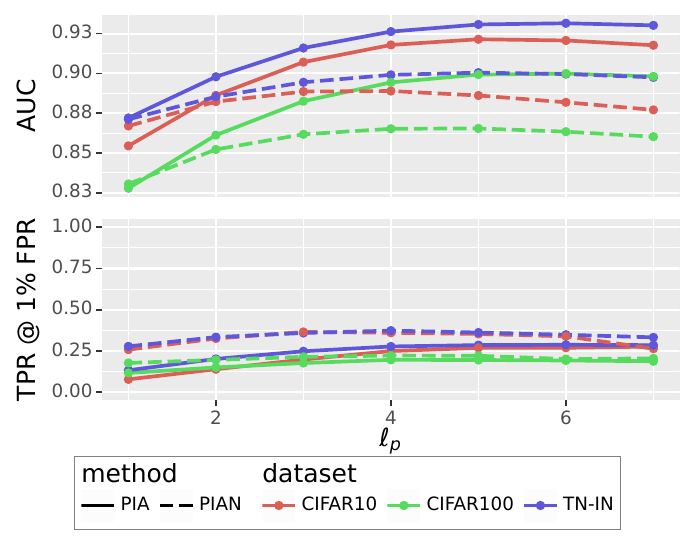}
        \subcaption[]{The results of PIA and PIAN on DDPM for different values of $\ell_p$-norm.}
    \end{subfigure}
\captionsetup{skip=5pt}
\caption{The performance of our method as $\ell_p$-norm varies. The top row shows the results for AUC, and the bottom row displays the results for TPR @ 1\% FPR.}
        \label{ablation_lp_norm}
        \vspace{-2mm}
\end{figure}

Our proposed method has three hyper-parameters: $t$ and the $\ell_p$-norm used in the attack metrics $R_{t,p}$ presented in \cref{final,sde_final}. The threshold $\tau$ presented in \cref{threshold}.

\textbf{Impact of $t$} ~ To evaluate the impact of $t$, we attack the target model at intervals of $0.01\times T$ from $0$ to $T$ and report the results across different models and datasets. We demonstrate the performance of our proposed method on two different models: GradTTS, a continuous-time diffusion model used for audio, in \cref{ablation_a}; and DDPM, a discrete-time diffusion model employed for images, in \cref{ablation_b}. The results indicate that our method produces a consistent pattern in the same model across different datasets, whether PIA or PIAN. Specifically, for GradTTS, both AUC and TPR @ 1\% FPR exhibit a rapid increase at the beginning as $t$ increases, followed by a decline around $t=0.5$. For DDPM, AUC and TPR @ 1\% FPR also demonstrate a rapid increase at the beginning as $t$ increases, followed by a decline around $t=200$. In \cref{ablation_c}, we randomly partition the CIFAR10 dataset four times and compare the performance of each partition. Consistent with the previous results, our method exhibits a similar trend across the different splits.

\begin{table}[]
\centering
\caption{The variation of Attack Success Rate (ASR) and TPR/FPR on the victim model with the threshold determined by the surrogate model.}
\label{shadow_model}
\adjustbox{width=0.9\textwidth}{
\begin{tabular}{c|cccc|cccc}
\toprule
& \multicolumn{4}{c|}{PIA}                                            & \multicolumn{4}{c}{PIAN}                                           \\ \cmidrule{2-9} 
                  & \multicolumn{2}{c|}{LibriTTS}        & \multicolumn{2}{c|}{CIFAR10} & \multicolumn{2}{c|}{LibriTTS}        & \multicolumn{2}{c}{CIFAR10} \\ \cmidrule {2-9} 
                  & ASR  & \multicolumn{1}{c|}{TPR/FPR}  & ASR         & TPR/FPR        & ASR  & \multicolumn{1}{c|}{TPR/FPR}  & ASR         & TPR/FPR       \\ \midrule
Surrogate model      & 89.5 & \multicolumn{1}{c|}{32.2/1}   & 78.5        & 16.5/1         & 88.3 & \multicolumn{1}{c|}{26.2/1}   & 76.9        & 19.0/1        \\
Victim model      & 89.1 & \multicolumn{1}{c|}{32.6/1.1} & 78.3        & 16.8/1.1       & 88.2 & \multicolumn{1}{c|}{24.5/0.9} & 76.8        & 19.0/1        \\ \bottomrule
\end{tabular}
}
\end{table}

\textbf{Impact of $\ell_p$-norm} ~ In \cref{ablation_lp_norm}, we compare the results obtained on $\ell_p$-norm using the $p=1$ to $7$, with the choice of $t$ being the same as in \cref{performace}. The results indicate an increase in performance at $\ell_1$-norm, followed by a decline after the $p=5$. It reveals that the combined effect of both large and small differences exhibits a synergistic influence when present in an appropriate ratio.

\textbf{Determining the value of $\tau$} ~ In \cref{shadow_model}, we present the variation of Attack Success Rate (ASR) and TPR/FPR on the victim model with the $\tau$ determined by the surrogate model. Specifically, we will randomly split the corresponding dataset into two halves four times, resulting in four different train-test splits. We will train four models using these splits. One of the models will be selected as the surrogate model, from which we will obtain the threshold. We will then use this $\tau$ to attack the other three victim models and record the average values. The results indicate that our method achieves promising results when using the $\tau$ selected from the surrogate model.

\begin{table}[tbp]
\centering
\caption{Comparison of different models. AUC is the result on the LJSpeech/TinyImageNet dataset.}
\label{compare}
\adjustbox{width=0.8\textwidth}{
\begin{tabular}{cccccc}
\toprule
Model                       & Size   & T       & Output          & Segmentation Length & Best AUC    \\ \midrule
DDPM                        & 35.9M  & 1000    & Image           & N/A            & 92.6        \\
\multicolumn{1}{l}{GradTTS} & 56.7M  & $[0,1]$ & Mel-spectrogram & 2s             & 99.6        \\
DiffWave                    & 30.3M  & 50      & Audio           & 0.25s          & 52.4        \\
FastDiff                    & 175.4M & 1000    & Audio           & 1.2s           & 54.4 \\ \bottomrule
\end{tabular}
}
\label{model_comparision}
\end{table}

\subsection{Which Type of Model Output is More Robust?}

There are generally two forms of output in TTS: mel-spectrograms and audio. In \cref{model_comparision}, we summarize the model details and best results of our proposed method on three TTS models using the LJSpeech dataset and the DDPM model on the TinyImageNet dataset. We only report the results of our method since it achieves better performance most of the time.

As shown in \cref{model_comparision}, with the same training and hold-out data, GradTTS achieves an AUC close to 100, while DiffWave and FastDiff only achieve the performance slightly above 50, which is close to random guessing. 
However, DiffWave has a similar size to DDPM and GradTTS, and FastDiff has similar $T$ with DDPM. Additionally, FastDiff has similar segmentation length to GradTTS. Thus, we believe that these hyperparameters are not the decisive parameters for the model's robustness. It is obvious that the output of GradTTS and DDPM is image-like. \cref{mel_example} provides an example of mel-spectrogram. The deep reasons why these models exhibit robustness can be further explored. We report these results hoping that they may inspire the design of models with MIA robustness.

\begin{figure}

        \centering 
        \includegraphics[width=0.5\textwidth]{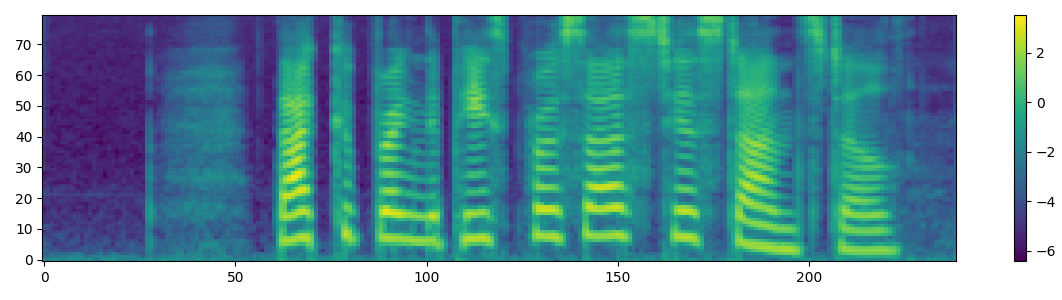} 
        \vspace{-3mm}
\caption{An example of mel-spectrogram.}
\vspace{-4mm}
    \label{mel_example}
\end{figure}

\begin{figure}
\centering
    \begin{subfigure}{0.45\textwidth}
        \centering 
        \includegraphics[width=\textwidth]{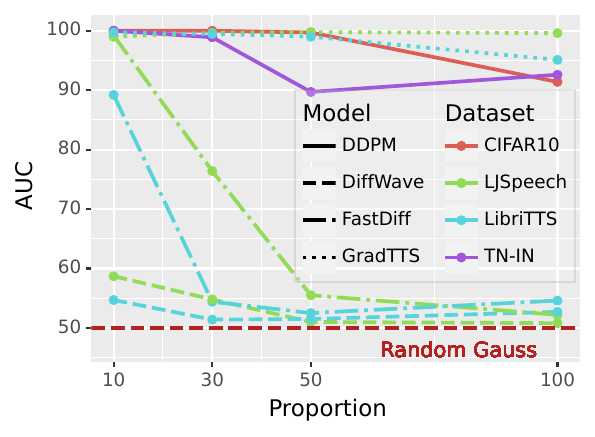} 
    \captionsetup{skip=0pt, width=.95\linewidth}
      \subcaption[]{The results of PIA and PIAN on Grad-TTS for different training and evaluation sample numbers.}
    \end{subfigure}
    \hspace{3mm}
    \begin{subfigure}{0.45\textwidth}
    \captionsetup{skip=0pt, width=.95\linewidth}
        \centering \includegraphics[width=\textwidth]{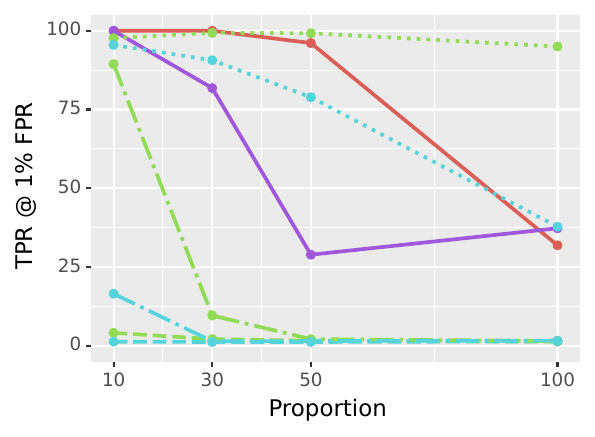}
        \subcaption[]{The results of PIA and PIAN on DDPM for different training and evaluation sample numbers.}
    \end{subfigure}
    
\caption{The performance of our method for different training and evaluation sample numbers. The top row shows the results for AUC, and the bottom row displays the results for TPR @ 1\% FPR.}
        \label{ablation_different}
\vspace{-4mm}
\end{figure}

We also explore the attack performance with various training and evaluation sample numbers. We select 10\%, 30\%, 50\%, and 100\% of the samples from the complete dataset. In each split, half of all samples are used for training, and the other half are utilized as a hold-out set. The results are presented in \cref{ablation_different}. As we can see, when only 10\% of the data is used, relatively high AUC and TPR @ 1\% FPR can be achieved. Additionally, we find that the AUC and TPR @ 1\% FPR decrease as the proportion of selected samples in the total dataset increases. However, for GradTTS and DDPM, the decrease is relatively gentle, while for DiffWave and FastDiff, the decrease is rapid. In other words, the robustness increases rapidly with the increase of training samples.

\vspace{-2mm}

\section{Conclusion}

In this paper, we propose an efficient membership inference attack method for diffusion models, namely Proximal Initialization Attack (PIA) and its normalized version, PIAN. We demonstrate its effectiveness on a continuous-time diffusion model, GradTTS, and two discrete-time diffusion models, DDPM and Stable Diffusion. Experimental results indicate that our proposed method can achieve similar AUC performance to SecMI and significantly higher TPR @ 1\% FPR with the cost of only 2 queries, which is much faster than the 12\textasciitilde62 queries required for SecMI in this paper. Additionally, we analyze the vulnerability of models in TTS, an audio generation tasks. The results suggest that diffusion models with the image-like output (mel-spectrogram) are more vulnerable than those with the audio output. Therefore, for privacy concerns, we recommend employing models with audio outputs in text-to-speech tasks.

\textbf{Limitation and Broader Impacts} ~ 
The purpose of our method is to identify whether a given sample is part of the training set. This capability can be leveraged to safeguard privacy rights by detecting instances of personal information being unlawfully used for training purposes. However, it is important to note that our method could also potentially result in privacy leaking. For instance, this could occur when anonymous data is labeled by determining whether a sample is part of the training set or as a part of data reconstruction attack. It is worth mentioning that our method solely relies on the diffusion model's output as we discussed in the threat model, but it does require the intermediate output. This dependency on the intermediate output may pose a limitation to our method.

\bibliographystyle{plain}








\newpage
\appendix
\section*{Appendix}

\section{Datasets and Diffusion Models}

\setcounter{footnote}{0} 

For TTS, we evaluate three commonly used datasets: LJSpeech, VCTK, and a subset of LibriTTS called libritts-lean-100. We test three models: GradTTS \footnote{\url{https://github.com/huawei-noah/Speech-Backbones/tree/main/Grad-TTS}}, FastDiff \footnote{\url{https://github.com/Rongjiehuang/FastDiff}}, and DiffWave \footnote{\url{https://github.com/lmnt-com/diffwave}}. For image generation, we evaluate the CIFAR10, CIFAR100 and TinyImageNet datasets using the same DDPM model as \cite{SecMI}, and Laion5, COCO for stable diffusion \cite{stable_diffusion}. Unless otherwise specified, we randomly select half of the samples as a training set and the other half as the hold-out set.

\subsection{Implementations Details} 

For the audio generation models, we use their codes from the official repositories and apply the default hyperparameters for all models except for the hyperparameters we mentioned. The training iterations were set to 1,000,000, due to the default value for the three audio generative models are all around this. For DDPM, all settings are the same as those in \cite{SecMI}.

On GradTTS and DDPM, we utilized a consistent attack time $t$ across different datasets for the same model. On DDPM, for Naive Attack, we set $t=200$. For SecMI, we set $t=100$, which is the same as their papers. For our proposed method, we set $t=200$. On GradTTS, for Naive Attack, we set $t=0.8$. For SecMI, we set $t=0.6$. Because SecMI is not designed for continuous-time diffusion, we discretize $[0, 1]$ into 1000 steps and then apply SecMI. For the proposed method, we adopt $t=0.3$. We chose $\ell_4$-norm to compute $R_{t,p}$. For other models, we choose the best $t$ because our focus is on the model's robustness. Moreover, as stated later, the difference between the best $t$ and a fixed $t$ is not significant. 

To conduct the experiment on stable diffusion, we download the stable-diffusion-v1-5 from \footnote{\url{https://huggingface.co/runwayml/stable-diffusion-v1-5}}, without any further fine-tuning or any other modification. We select 2500 sample from 600M laion-aesthetics-v2-5plus as the member set, since stable-diffusion-v1-5 is trained on this dataset as mentioned by HuggingFace. We randomly select 2500 images from the COCO2017-val as the hold-out set, since COCO2017-val is one of the official validation set to examine the performance of stable diffusion. For Naive Attach, we set $t=500$. For SecMI, we set $t=100$. For proposed method, we set $t=500$. We also chose $\ell_4$-norm to compute $R_{t,p}$.

\section{More Experimental Results}
\subsection{Robustness on FastDiff and DiffWave}

\cref{fastdiff_diffwave} shows the AUC of different methods at FastDiff and DiffWave model on three datasets. The performance of all three MIA methods is very poor.

\begin{table}[h!]
\centering
\caption{Performance of AUC on FastDiff and DiffWave across three datasets.}
\label{fastdiff_diffwave}
\adjustbox{width=0.9\textwidth}{
\begin{tabular}{c|ccc|ccc}
\toprule
             & \multicolumn{3}{c|}{FastDiff} & \multicolumn{3}{c}{DiffWave} \\ \midrule
Method       & LJSpeech  & VCTK  & LibriTTS  & LJSpeech  & VCTK  & LibriTTS \\ \midrule
NA~\cite{matsumoto2023membership} & 52.6      & 55.1  & 53.7      & 52.7      & 53.8  & 51.2     \\
SecMI~\cite{SecMI}        & 51.6      & 56.3  & 53.7      & 53.2      & 54.3  & 52.4     \\
PIA          & 51.6      & 57.1  & 54.1      & 54.4      & 54.2  & 50.8     \\
PIAN     & 52.4      & 57.0  & 54.6      & 50.0      & 50.5  & 50.7     \\ \bottomrule
\end{tabular}
}
\end{table}

\subsection{Distribution for samples from training set and hold-out set.}

\cref{distribution} shows the $R_{t=0.3,p=4}$ distribution for samples from training set and hold-out set at GradTTS on different datasets of PIAN.

\begin{figure}[h]
\centering
    \begin{subfigure}{0.32\textwidth}
        \centering 
        \includegraphics[width=\textwidth]{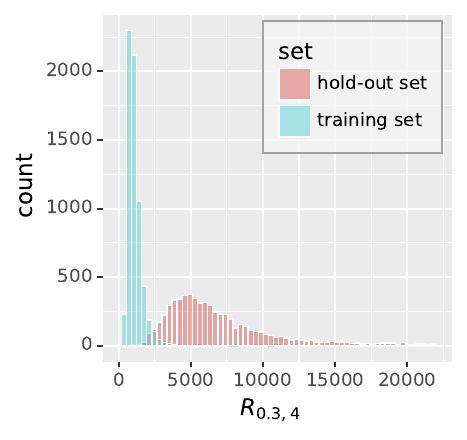} 
    \captionsetup{skip=0pt, width=.95\linewidth}
      \subcaption[]{Distribution on LJSpeech.}
    \end{subfigure}
    \begin{subfigure}{0.32\textwidth}
    \captionsetup{skip=0pt, width=.95\linewidth}
        \centering \includegraphics[width=\textwidth]{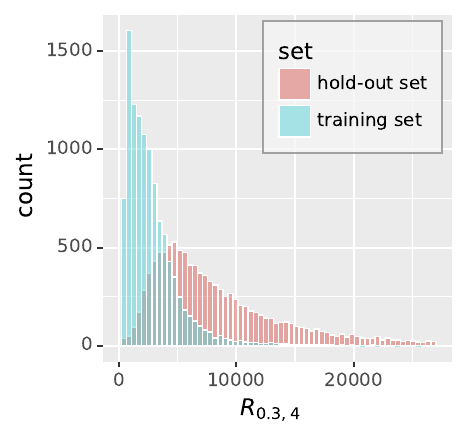}
        \subcaption[]{Distribution on VCTK.}
    \end{subfigure}
    \begin{subfigure}{0.32\textwidth}
    \captionsetup{skip=0pt, width=.95\linewidth}
        \centering \includegraphics[width=\textwidth]{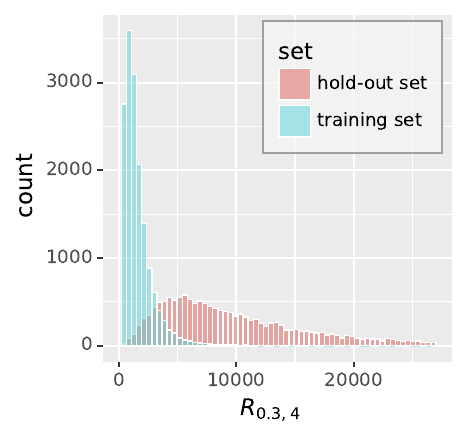}
        \subcaption[]{Distribution on LibriTTS.}
    \end{subfigure}

\caption{$R_{t=0.3,p=4}$ distribution for samples from training set and hold-out set at GradTTS on different datasets of PIAN.}
        \label{distribution}
\vspace{-4mm}
\end{figure}

\subsection{Log-scaled ROC curve}

As suggested by \cite{carlini2022membership}, \cref{log_scaled_ddpm} and \cref{log_scaled_gradtts} display the log-scaled ROC curves. These curves demonstrate that the proposed method outperforms NA and SecMI at most of times.

\begin{figure}[h]
\centering
    \begin{subfigure}{0.45\textwidth}
        \centering 
        \includegraphics[width=\textwidth]{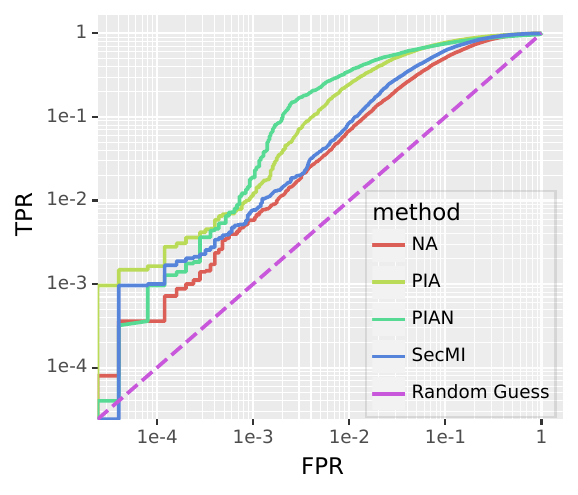} 
    \captionsetup{skip=0pt, width=.95\linewidth}
      \subcaption[]{Log-scaled ROC on CIFAR10.}
    \end{subfigure}
    \begin{subfigure}{0.45\textwidth}
    \captionsetup{skip=0pt, width=.95\linewidth}
        \centering \includegraphics[width=\textwidth]{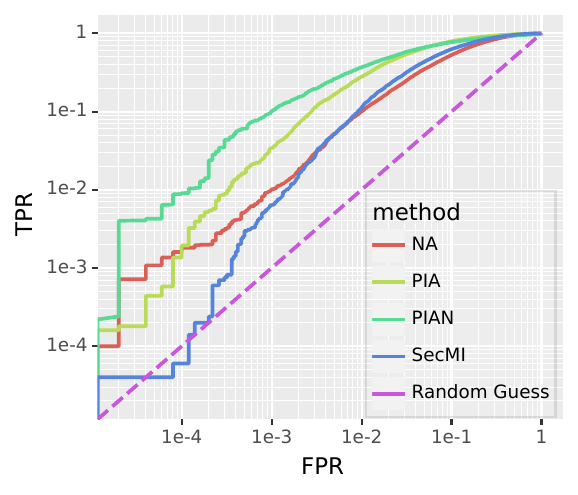}
        \subcaption[]{Log-scaled ROC on TinyImageNet.}
    \end{subfigure}

\caption{The log-scaled ROC at DDPM of different methods on CIFAR10 and TinyImageNet.}
\vspace{-4mm}
\label{log_scaled_ddpm}
\end{figure}

\begin{figure}[t!]
\centering
    \begin{subfigure}{0.45\textwidth}
        \centering 
        \includegraphics[width=\textwidth]{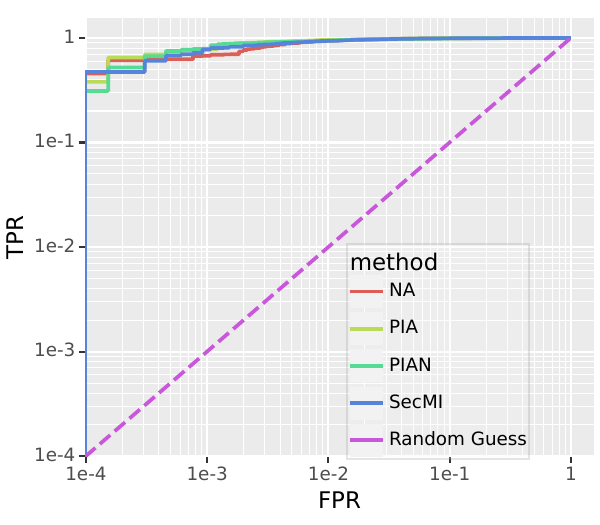} 
    \captionsetup{skip=0pt, width=.95\linewidth}
      \subcaption[]{Log-scaled ROC on LJSpeech.}
    \end{subfigure}
    \begin{subfigure}{0.45\textwidth}
    \captionsetup{skip=0pt, width=.95\linewidth}
        \centering \includegraphics[width=\textwidth]{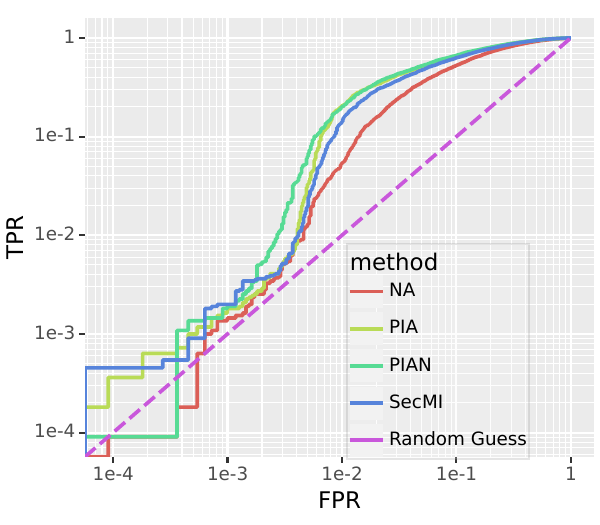}
        \subcaption[]{Log-scaled ROC on VCTK.}
    \end{subfigure}
    
    \begin{subfigure}{0.45\textwidth}
    \captionsetup{skip=0pt, width=.95\linewidth}
        \centering \includegraphics[width=\textwidth]{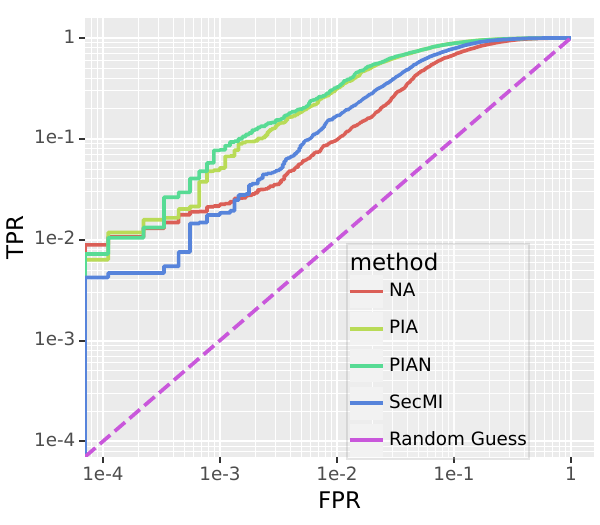}
        \subcaption[]{Log-scaled ROC on LibriTTS.}
    \end{subfigure}
    
\caption{The log-scaled ROC at GradTTS of different methods on LJSpeech, VCTK and LibriTTS.}
\vspace{-4mm}
\label{log_scaled_gradtts}
\end{figure}

\subsection{Visualization of Reconstruction}

Note \cref{final} is equal to the distance between $\boldsymbol{\epsilon}_{\boldsymbol{\theta}}\left( \boldsymbol{x}_{0} , 0\right)$ and the predicted one $\boldsymbol{\epsilon}'=\boldsymbol{\epsilon}_{\boldsymbol{\theta}}\left(\boldsymbol x_t, t \right)$ , where $x_t=\sqrt{\bar{a}_{t}} \boldsymbol{x}_{0}+\sqrt{1-\bar{a}_{t}}  \boldsymbol{\epsilon}_{\boldsymbol{\theta}}\left( \boldsymbol{x}_{0} , 0\right)$. \cref{reconstruction_ddpm} and \cref{reconstructed_nonmembership} show the reconstructed sample $\boldsymbol x'_0 = \frac{\boldsymbol x_t- \sqrt{1-\bar{a}_{t}}\boldsymbol{\epsilon}' }{\sqrt{\bar{a}_{t}}}$ from $\boldsymbol x_t$ using the predicted $\boldsymbol{\epsilon}'$  at DDPM on CIFAR10 of PIAN. The reconstructed samples from $t=100$ are clear for both the training set and the hold-out set. The reconstructed samples from $t=400$ are blurry for both sets. However, for $t=200$, the reconstructed samples are clear for the training set but blurry for the hold-out set.

For GradTTS, we use \cref{ODE_inference} to reconstruct samples from $x_t$. This reconstruction is not rigorous, but we just use it to give a visualization. \cref{gradtts_trainingset} and \cref{gradtts_holdout} show the reconstructed samples on LJSpeech from PIA. The observed pattern is consistent with DDPM.

\begin{figure}
\centering
    \begin{subfigure}{0.6\textwidth}
        \centering 
        \includegraphics[width=\textwidth]{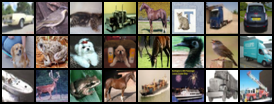} 
    \captionsetup{skip=0pt, width=.95\linewidth}
      \subcaption[]{Samples in training set.}
    \end{subfigure}
    \begin{subfigure}{0.6\textwidth}
    \captionsetup{skip=0pt, width=.95\linewidth}
        \centering \includegraphics[width=\textwidth]{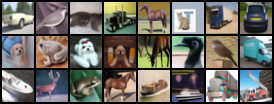}
        \subcaption[]{Samples reconstructed from $t=100$.}
    \end{subfigure}
    \begin{subfigure}{0.6\textwidth}
    \captionsetup{skip=0pt, width=.95\linewidth}
        \centering \includegraphics[width=\textwidth]{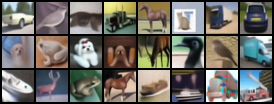}
        \subcaption[]{Samples reconstructed from $t=200$.}
    \end{subfigure}
    \begin{subfigure}{0.6\textwidth}
    \captionsetup{skip=0pt, width=.95\linewidth}
        \centering \includegraphics[width=\textwidth]{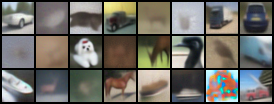}
        \subcaption[]{Samples reconstructed from $t=400$.}
    \end{subfigure}
\caption{Samples in training set and the reconstructed samples at DDPM on CIFAR10 from PIAN.}
        \label{reconstruction_ddpm}
\vspace{-4mm}
\end{figure}

\begin{figure}
\centering
    \begin{subfigure}{0.6\textwidth}
        \centering 
        \includegraphics[width=\textwidth]{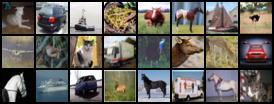} 
    \captionsetup{skip=0pt, width=.95\linewidth}
      \subcaption[]{Samples in hold-out set.}
    \end{subfigure}
    \begin{subfigure}{0.6\textwidth}
    \captionsetup{skip=0pt, width=.95\linewidth}
        \centering \includegraphics[width=\textwidth]{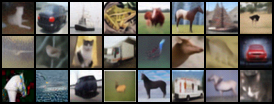}
        \subcaption[]{Samples reconstructed from $t=100$.}
    \end{subfigure}
    \begin{subfigure}{0.6\textwidth}
    \captionsetup{skip=0pt, width=.95\linewidth}
        \centering \includegraphics[width=\textwidth]{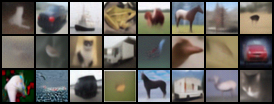}
        \subcaption[]{Samples reconstructed from $t=200$.}
    \end{subfigure}
    \begin{subfigure}{0.6\textwidth}
    \captionsetup{skip=0pt, width=.95\linewidth}
        \centering \includegraphics[width=\textwidth]{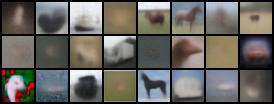}
        \subcaption[]{Samples reconstructed from $t=400$.}
    \end{subfigure}
\caption{Samples in hold-out set and the reconstructed samples at DDPM on CIFAR10 from PIAN.}
        \label{reconstructed_nonmembership}
\vspace{-4mm}
\end{figure}

\begin{figure}
\centering
    \begin{subfigure}{0.8\textwidth}
        \centering 
        \includegraphics[width=\textwidth]{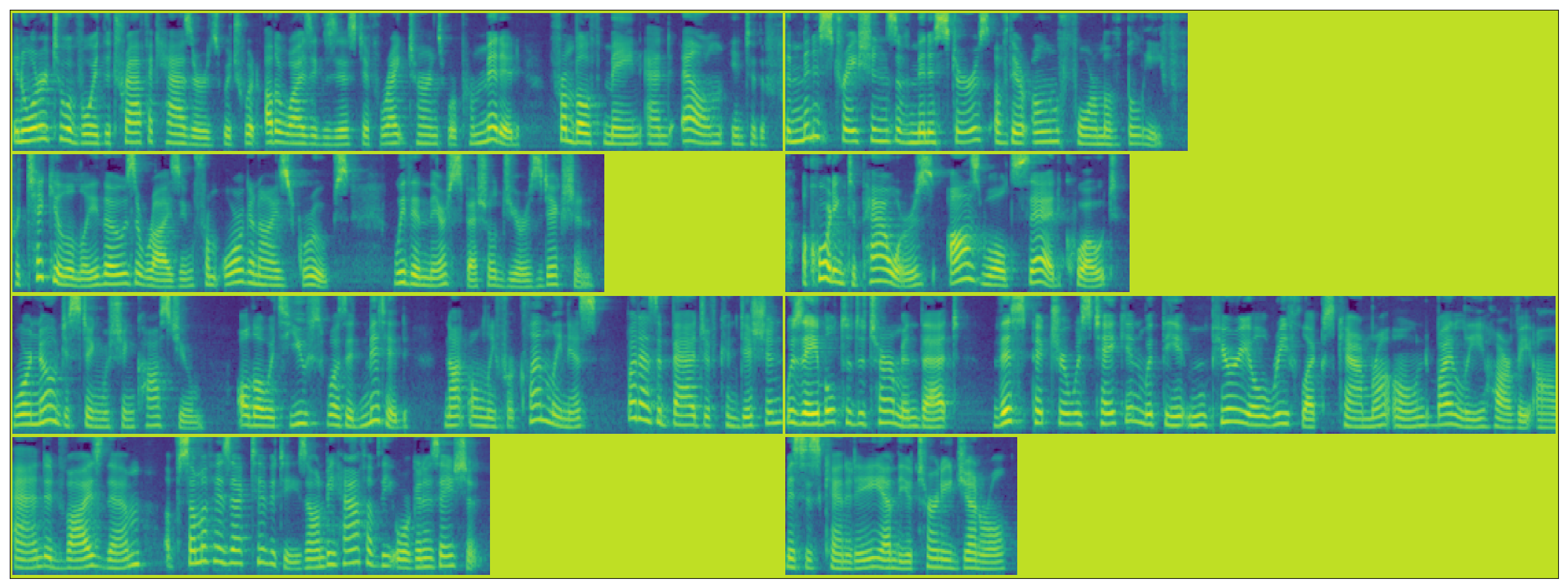} 
    \captionsetup{skip=0pt, width=.95\linewidth}
      \subcaption[]{Samples in training set.}
    \end{subfigure}
    \begin{subfigure}{0.8\textwidth}
    \captionsetup{skip=0pt, width=.95\linewidth}
        \centering \includegraphics[width=\textwidth]{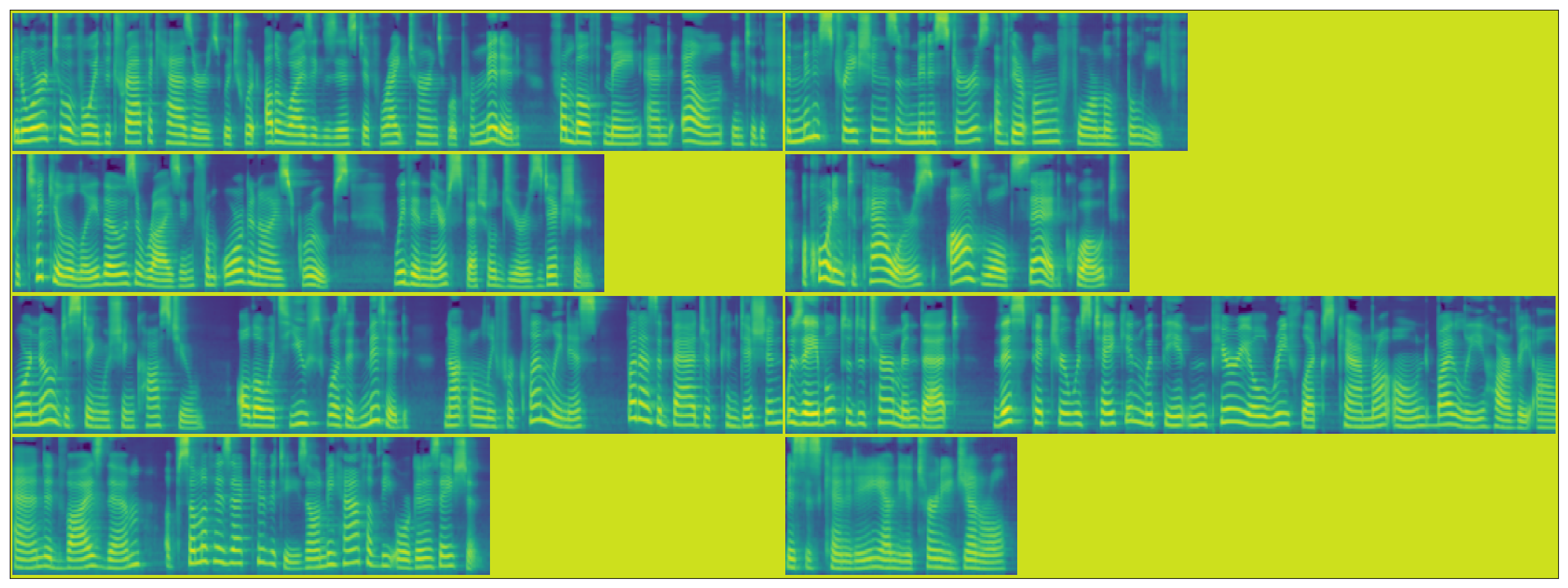}
        \subcaption[]{Samples reconstructed from $t=0.1$.}
    \end{subfigure}
    \begin{subfigure}{0.8\textwidth}
    \captionsetup{skip=0pt, width=.95\linewidth}
        \centering \includegraphics[width=\textwidth]{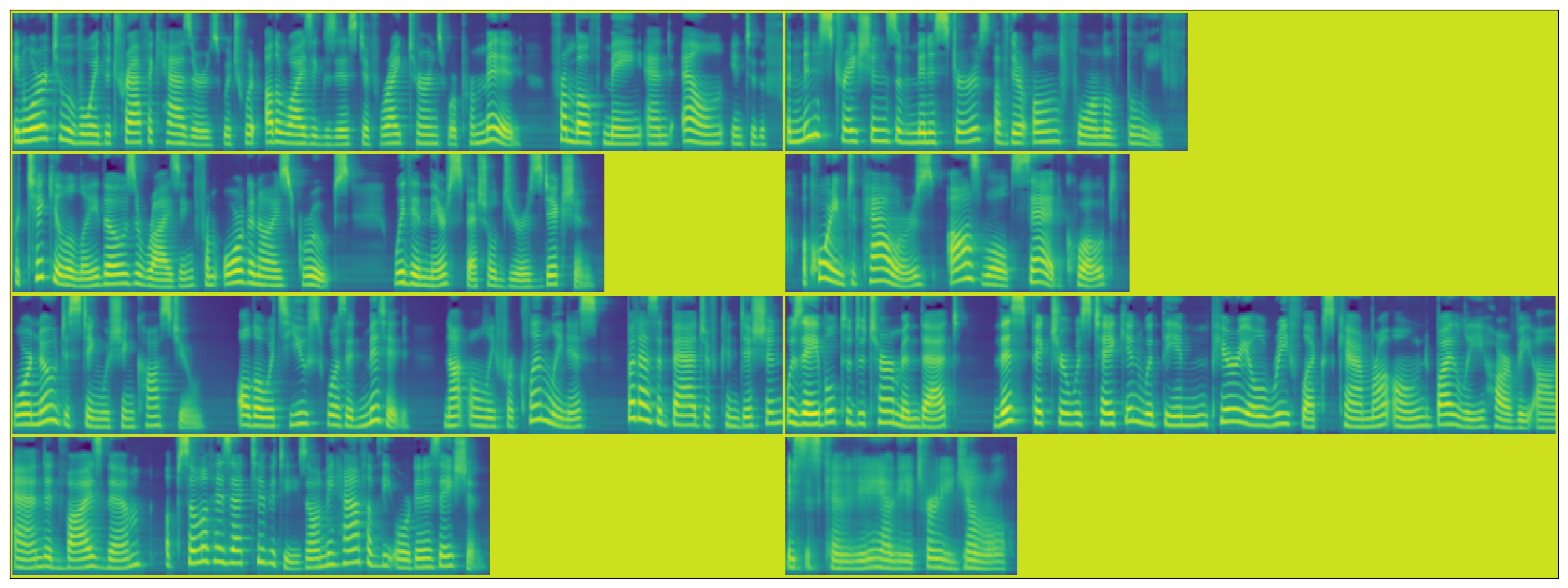}
        \subcaption[]{Samples reconstructed from $t=0.6$.}
    \end{subfigure}
    \begin{subfigure}{0.8\textwidth}
    \captionsetup{skip=0pt, width=.95\linewidth}
        \centering \includegraphics[width=\textwidth]{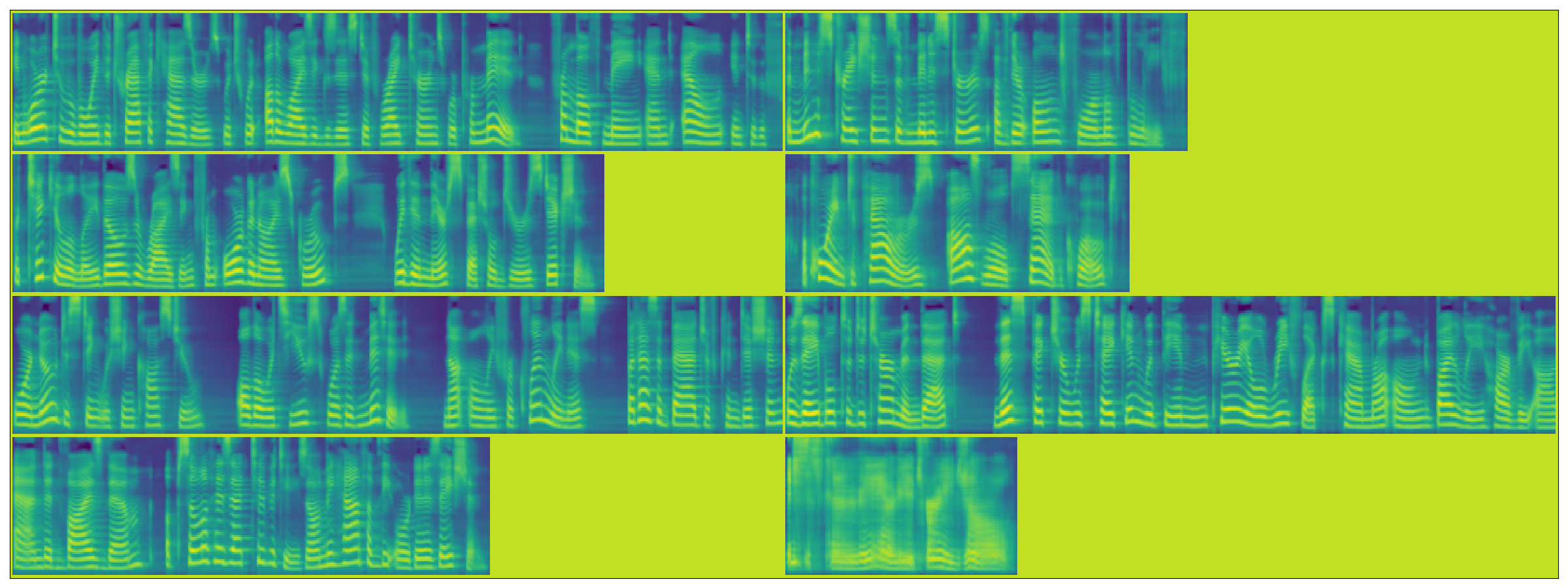}
        \subcaption[]{Samples reconstructed from $t=0.95$.}
    \end{subfigure}
\caption{Samples in training set and the reconstructed samples at GradTTS on LJSpeech from PIA.}
        \label{gradtts_trainingset}
\vspace{-4mm}
\end{figure}

\begin{figure}
\centering
    \begin{subfigure}{0.8\textwidth}
        \centering 
        \includegraphics[width=\textwidth]{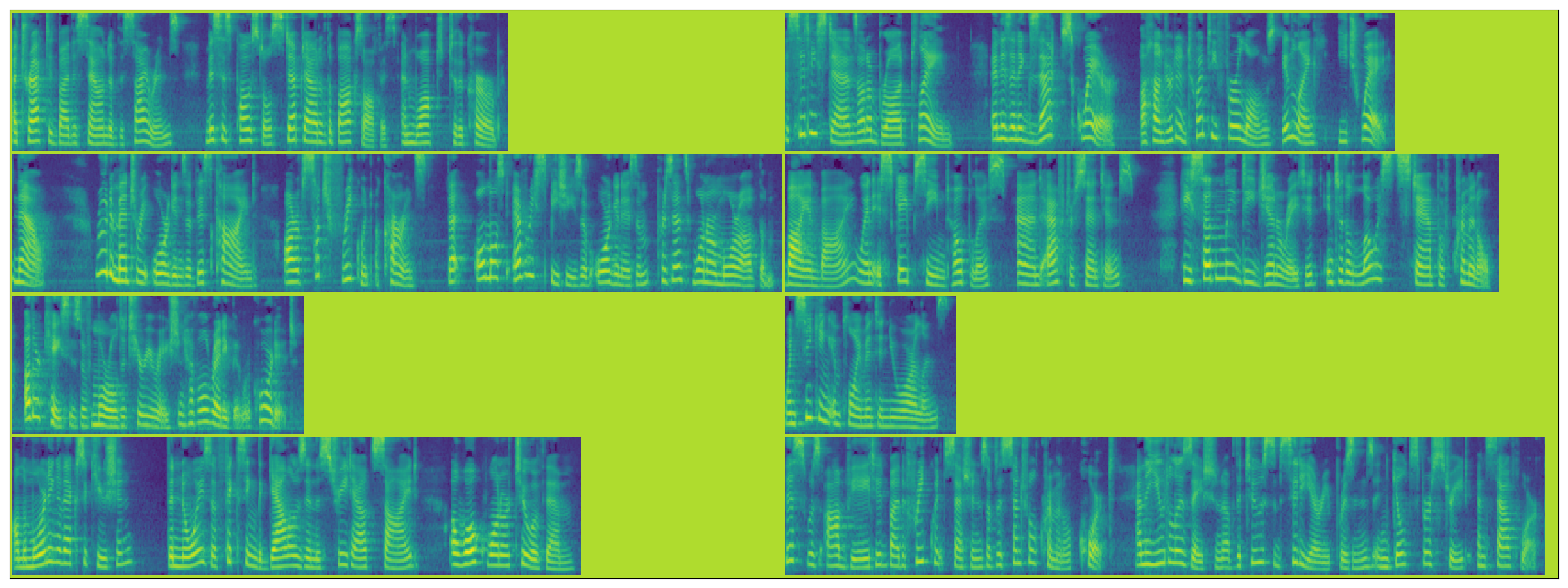} 
    \captionsetup{skip=0pt, width=.95\linewidth}
      \subcaption[]{Samples in hold-out set.}
    \end{subfigure}
    \begin{subfigure}{0.8\textwidth}
    \captionsetup{skip=0pt, width=.95\linewidth}
        \centering \includegraphics[width=\textwidth]{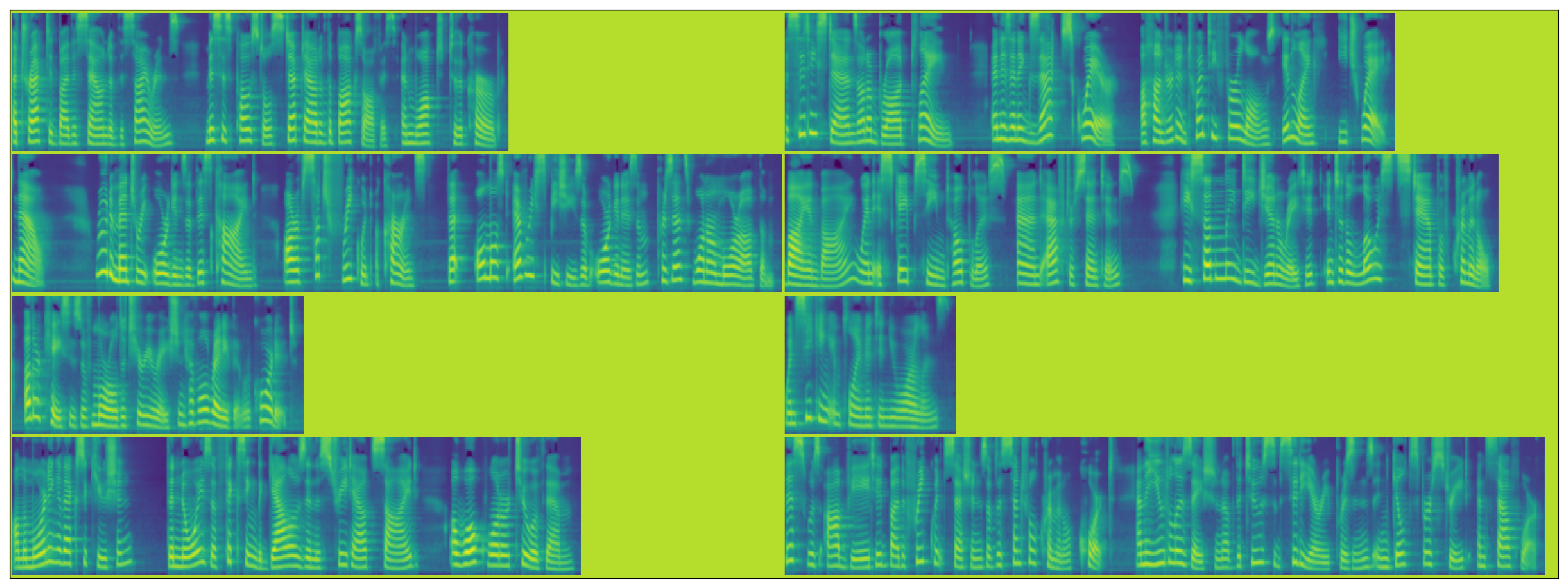}
        \subcaption[]{Samples reconstructed from $t=0.1$.}
    \end{subfigure}
    \begin{subfigure}{0.8\textwidth}
    \captionsetup{skip=0pt, width=.95\linewidth}
        \centering \includegraphics[width=\textwidth]{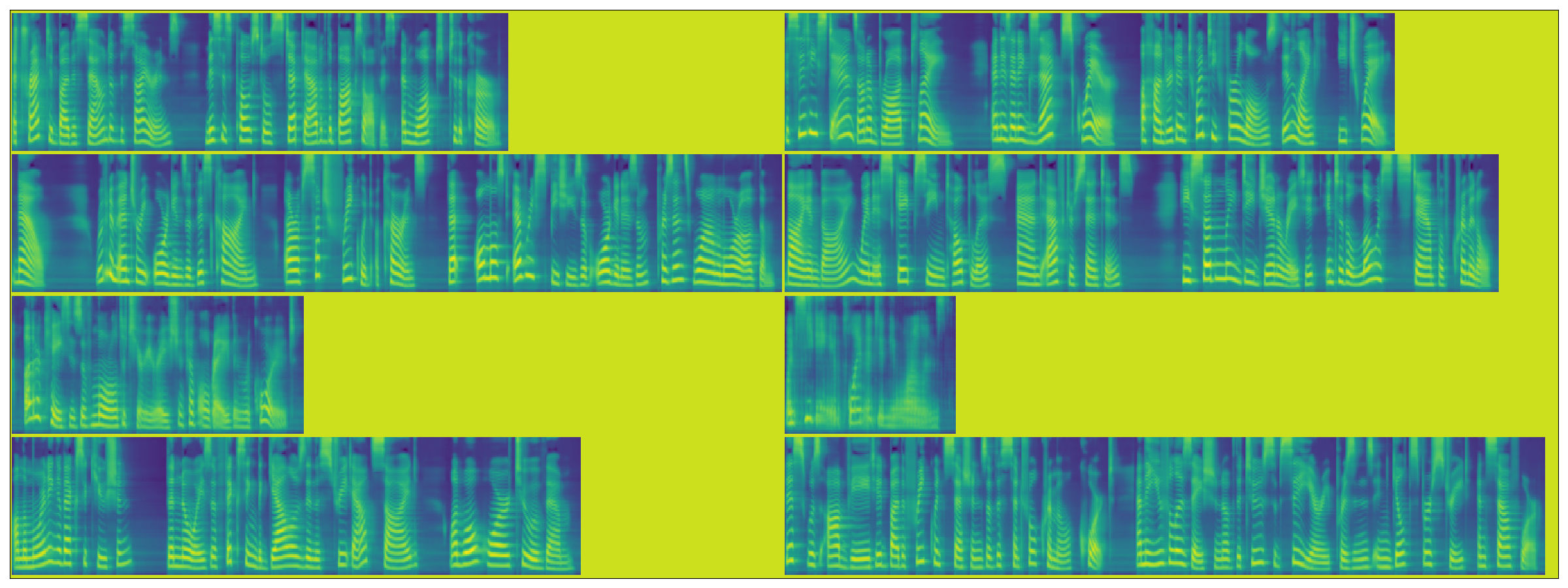}
        \subcaption[]{Samples reconstructed from $t=0.6$.}
    \end{subfigure}
    \begin{subfigure}{0.8\textwidth}
    \captionsetup{skip=0pt, width=.95\linewidth}
        \centering \includegraphics[width=\textwidth]{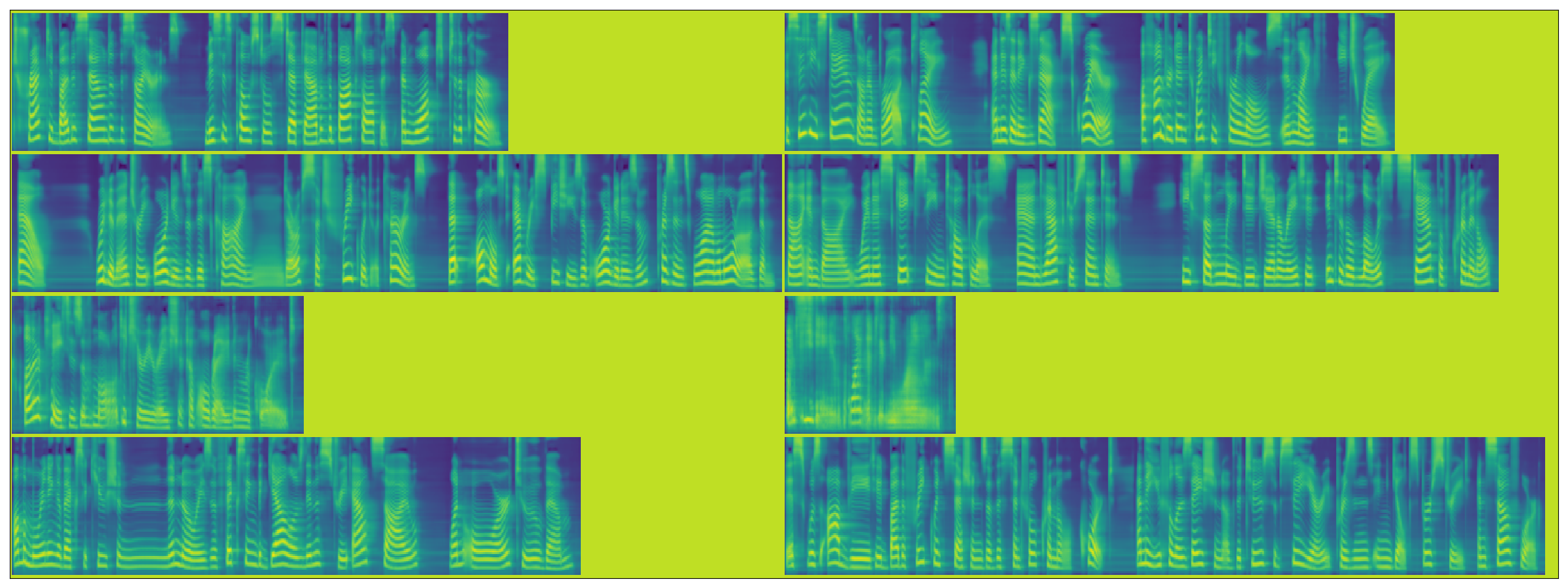}
        \subcaption[]{Samples reconstructed from $t=0.95$.}
    \end{subfigure}
\caption{Samples in hold-out set and the reconstructed samples at GradTTS on LJSpeech from PIA.}
        \label{gradtts_holdout}
\vspace{-4mm}
\end{figure}


\end{document}